\DeclareMathAlphabet{\mathcal}{OMS}{cmsy}{m}{n}
\DeclareSymbolFont{largesymbols}{OMX}{cmex}{m}{n}
\newtheorem{lemma}{Lemma}
\newtheorem{theorem}{Theorem}
\begin{document}

\title{\LARGE Maritime Coverage Enhancement Using UAVs \\Coordinated with Hybrid Satellite-Terrestrial Networks}

\author{\normalsize Xiangling~Li,
Wei~Feng,~\IEEEmembership{Senior Member,~IEEE,}
Yunfei~Chen,~\IEEEmembership{Senior Member,~IEEE,}\\
Cheng-Xiang~Wang,~\IEEEmembership{Fellow,~IEEE,}
Ning~Ge,~\IEEEmembership{Member,~IEEE}
        \thanks{X.~Li, W.~Feng (corresponding author), and N. Ge are with the Beijing National Research Center for Information Science and Technology, Tsinghua University, Beijing 100084, China. W. Feng is also with the Peng Cheng Laboratory, Shenzhen 518000, China. Y. Chen is with the School of Engineering, University of Warwick, Coventry CV4 7AL, U.K. C.-X. Wang is with the National Mobile Communications Research Laboratory, School of Information Science and Engineering, Southeast University, Nanjing 210096, China, and also with Purple Mountain Laboratories, Nanjing 211111, China. (e-mail: lingjlu@yeah.net, {fengwei}@tsinghua.edu.cn, {Yunfei.Chen}@warwick.ac.uk, {chxwang}@seu.edu.cn, {gening}@tsinghua.edu.cn).

        Part of this work has been accepted by IEEE WOCC'2019 \cite{IEEEhowto:kopka52}.}
        }


\maketitle
\begin{spacing}{1.58}
\begin{abstract}
Due to its agile maneuverability, unmanned aerial vehicles (UAVs) have shown great promise for on-demand communications. In practice, UAV-aided aerial base stations are not separate. Instead, they rely on existing satellites/terrestrial systems for spectrum sharing and efficient backhaul.
In this case, how to coordinate satellites, UAVs and terrestrial systems is still an open issue. In this paper, we deploy UAVs for coverage enhancement of a hybrid satellite-terrestrial maritime communication network. Under the typical composite channel model including both large-scale and small-scale fading, the UAV trajectory and in-flight transmit power are jointly optimized, subject to constraints on UAV kinematics, tolerable interference, backhaul, and the total energy of UAV for communications. Different from existing studies, only the location-dependent large-scale channel state information (CSI) is assumed available, because it is difficult to obtain the small-scale CSI before takeoff in practice, and the ship positions can be obtained via the dedicated maritime Automatic Identification System.
The optimization problem is non-convex. We solve it by problem decomposition, successive convex optimization and bisection searching tools. Simulation results demonstrate that the UAV fits well with existing satellite and terrestrial systems, using the proposed optimization framework.
\end{abstract}

\begin{IEEEkeywords}
Hybrid satellite-terrestrial network, maritime communications, power allocation, trajectory, unmanned aerial vehicle (UAV).
\end{IEEEkeywords}
\end{spacing}
\IEEEpeerreviewmaketitle

%
%
%
%
\begin{spacing}{1.9}
\section{Introduction}
Currently, various activities on the ocean increase, leading to the growing demands for wireless communications~\cite{IEEEhowto:kopka52, mari1, mari2}. To satisfy the increasing requirements, hybrid satellite-terrestrial networks emerge, in which satellites and terrestrial systems are integrated for a better maritime coverage \cite{IEEEhowto:kopka24,IEEEhowto:kopka40,IEEEhowto:kopka41}. Basically, the satellites, deployed in the Geostationary Earth Orbit or Low Earth Orbits, can provide a wide-area coverage \cite{IEEEhowto:kopka7}. However, their transmission rate is usually limited due to long transmission distance and restricted onboard payloads. High-throughput satellites have thus been attracting great attentions \cite{IEEEhowto:kopka8}. Yet, it is still quite challenging to realize the global broadband coverage using the state-of-the-art satellite technologies at a practically affordable cost. As an alternative, the terrestrial base stations (TBSs) can be deployed along the coastline to offer high-rate communication services. However, their coverage range is usually limited.

Different from satellites and TBSs, unmanned aerial vehicles (UAVs) have shown considerable promise for agile communications \cite{IEEEhowto:kopka53,IEEEhowto:kopka4}. UAVs can enable aerial base stations with largely increased line of sight (LOS) transmission range. Moreover, UAVs can adaptively change their spatial locations according to the communication demands.
While most existing studies on UAVs focused on the terrestrial scenario, we explore the potential gain of UAVs for maritime coverage enhancement in this paper. Particularly, we focus on the coordination issue between introduced UAVs and existing maritime satellites and terrestrial systems.

Related studies can be categorized into three types according to the considered system model, which are summarized as follows.

\subsubsection{UAVs only}
Most previous works focused on the UAV-only system model, while ignoring satellites and TBSs. For rotary-wing UAVs, the optimal placement of UAVs has been widely investigated, leading to many insightful observations~\cite{IEEEhowto:kopka6,IEEEhowto:kopka31,IEEEhowto:kopka2,IEEEhowto:kopka9,IEEEhowto:kopka22,IEEEhowto:kopka45,IEEEhowto:kopka37}.
For fixed-wing UAVs, the trajectory design is an important issue, which is closely related to the UAV's kinematic parameters.
Considering the UAV's maximum velocity and/or maximum acceleration, the trajectory of UAV was optimized for maximum throughput or minimum UAV periodic flight duration, or
optimal energy efficiency \cite{IEEEhowto:kopka33,IEEEhowto:kopka3,IEEEhowto:kopka29,IEEEhowto:kopka14,IEEEhowto:kopka35,IEEEhowto:kopka27}.
These works~\cite{IEEEhowto:kopka6,IEEEhowto:kopka31,IEEEhowto:kopka2,IEEEhowto:kopka9,IEEEhowto:kopka22,IEEEhowto:kopka45,IEEEhowto:kopka37,IEEEhowto:kopka33,IEEEhowto:kopka3,IEEEhowto:kopka29,IEEEhowto:kopka14,IEEEhowto:kopka35,IEEEhowto:kopka27} mainly considered static users. For mobile users, the ergodic achievable rate was maximized by dynamically adjusting the UAV heading~\cite{IEEEhowto:kopka28,IEEEhowto:kopka17,IEEEhowto:kopka36}. Intuitively in the maritime scenario, the UAV's trajectory should adaptively cater to the mobility of ships, providing an accompanying broadband coverage, which however remains elusive.

\subsubsection{Coexistence of UAVs and TBSs}
In addition to UAV-only models, the coexistence of UAVs and TBSs was investigated in \cite{IEEEhowto:kopka30,IEEEhowto:kopka26,IEEEhowto:kopka38,IEEEhowto:kopka34, IEEEhowto:kopka46}.
For rotary-wing UAVs, the TBS can be used as a hub to connect UAVs to the network \cite{IEEEhowto:kopka30}. In this case, the access link and backhaul link should be jointly optimized to maximize the sum rate. In \cite{IEEEhowto:kopka26}, the UAV-based multi-hop backhaul network was formulated to adapt to the dynamics of the network. Outage probability is also an important issue for the coexistence of UAVs and TBSs~\cite{IEEEhowto:kopka38,IEEEhowto:kopka34,IEEEhowto:kopka46}. In \cite{IEEEhowto:kopka34}, the outage probability was minimized. In \cite{IEEEhowto:kopka46}, the throughput was maximized subject to the maximum outage probability constraint. For the maritime scenario, the TBS is the primary choice for UAV backhaul, due to their high-speed transmission rate.

\subsubsection{Coexistence of UAVs and Satellites}
More recently, the integration of UAVs and satellites has been investigated in \cite{IEEEhowto:kopka10, IEEEhowto:kopka32, IEEEhowto:kopka12,IEEEhowto:kopka19,IEEEhowto:kopka25,IEEEhowto:kopka13}. Particularly,
the authors of \cite{IEEEhowto:kopka10} investigated the integration of satellite and UAV communications for heterogeneous
flying vehicles. In addition, the long transmission delay is quite challenging for satellites. Thus in \cite{IEEEhowto:kopka12}, the impact of UAV altitude on the average delay was analyzed to coordinate UAVs and satellites. A multi-UAV assisted network was formulated in \cite{IEEEhowto:kopka19}, where the coverage probability and the ergodic achievable rate were analyzed for post-disaster areas.
The airborne mobile wireless networks were considered in \cite{IEEEhowto:kopka25}, where an efficient power allocation scheme was proposed to support the diverse real-time services.

Despite of the aforementioned interesting works, there are still open problems to integrate UAVs into hybrid satellite-terrestrial maritime communication networks.
Firstly, to solve the spectrum scarcity problem, it is valuable to explore the potential of spectrum sharing among satellites, UAVs and terrestrial networks. Till now, spectrum sharing between satellites and terrestrial networks has been studied \cite{IEEEhowto:kopka42,IEEEhowto:kopka43,IEEEhowto:kopka44}.
For more complicated spectrum sharing among satellites, UAVs and terrestrial networks, it is a crucial issue to obtain the channel state information (CSI) for interference mitigation.
Both the large delay of satellite transmission and the mobility of UAVs and ships render this issue challenging.
Secondly, before takeoff, a whole trajectory of UAVs needs to be planned for coverage enhancement according to the mobility of targeted ship.
However, the limited capacity of wireless backhaul affects the real-time transmission, and the communication energy of UAVs provided by the batteries is also limited. These constraints should be considered in the optimization of UAV trajectory.
Besides, different from most previous works which use the free space path loss model to simplify analysis, it is more practical to consider both large-scale and small-scale fading \cite{feng13,feng17}. However, it is difficult to acquire the random small-scale fading before takeoff~\cite{IEEEhowto:kopka39}.

Motivated by the above observations, we investigate a hybrid satellite-UAV-terrestrial maritime communication network where UAVs are integrated for coverage enhancement. Considering
the severe environment on the ocean, we consider the fixed-wing UAV, which has longer duration of flight and stronger anti-wind capability than the rotary-wing UAV. In our model, the UAV shares spectrum with satellites, and utilizes TBSs for wireless backhaul. A typical composite channel model including both large-scale and small-scale fading is used.
We obtain the ship positions from the dedicated maritime Automatic Identification System. Accordingly, quite different from the terrestrial scenario, we assume that the large-scale CSI
is available before UAV takes off. Because the large-scale CSI is location dependent, and we can obtain it with historical or pre-measured data. We optimize
the whole trajectory and transmit power during the fight, subject to the UAV's kinematical constraints, the backhaul constraints, tolerable interference constraints and the communication energy. The optimization problem is non-convex. We decompose the problem and solve it by successive convex optimization and bisection searching tools. Simulation results demonstrate that the UAV fits well with existing satellite and terrestrial systems. Besides, a significant performance gain can be achieved via joint optimization of the UAV's trajectory and transmit power by using only the large-scale CSI.

The rest of this paper is organized as follows. In Section II, the system model is introduced. The problem for the UAV-aided coverage enhancement is formulated and solved in Section III. In Section IV, simulation results are presented. Section V concludes the paper.

Throughout this paper, the vectors and scalars are denoted by boldface letters, and normal letters, respectively. ${|\cdot|}$ indicates the absolute value of a scalar. Transpose operator is indicated with $[\cdot]^T$. $\ell_p$-norm means ${\left\| {\bm x} \right\|_p} = {\left( {{{\sum\nolimits_{i = 1}^n {\left| {{x_i}} \right|} }^p}} \right)^{{1 \mathord{\left/ {\vphantom {1 p}} \right.\kern-\nulldelimiterspace} p}}}$. ${\mathcal{CN}(0,\sigma^2)}$ represents the complex Gaussian distribution with zero mean and $\sigma^2$ variance. $\dot{{\bm x}}_t$ and $\ddot{{\bm x}}_t$ denote the first-order and second-order derivatives of ${\bm x}_t$ with respect to $t$. ${\bf{E}\{\cdot\}}$ denotes the expectation operator.

\section{System Model}
We consider a practical hybrid maritime network consisting of mobile users (ships), UAVs, TBSs and satellites, as shown in Fig. \ref{fig_sys_model}. The TBSs are deployed along the coastline to provide communication services for users in the area of coastal waters. The broadband coverage area of TBSs is usually limited due to large non-line-of-sight pathloss.
Out of the coverage area of TBSs, the maritime satellites provide communication links. For the ships equipped with expensive high-gain antennas, the broadband service can be guaranteed. Whereas for the low-end ships without high-gain antennas, it is still difficult to enjoy a broadband service even within the coverage area of satellites. To fill up the blind holes, we utilize UAVs to provide broadband services in an on-demand manner. When a user requests high-rate communications, a UAV will be sent out to provide that service. Otherwise, the UAV waits near a TBS.

In this paper, the spectrum is shared between UAVs and satellites. Thus, there may be the interference between the UAV-to-user link and the satellite-to-user link. Because the antenna gain of the users served by UAVs is lower than that of the users served by satellites, the interference on the users served by UAVs from satellites can be ignored. To mitigate the interference on the users served by satellites, we can adjust the trajectory and the transmit power of UAVs.

\begin{figure*}[!t]
  \centering
  \includegraphics[width=6.5in]{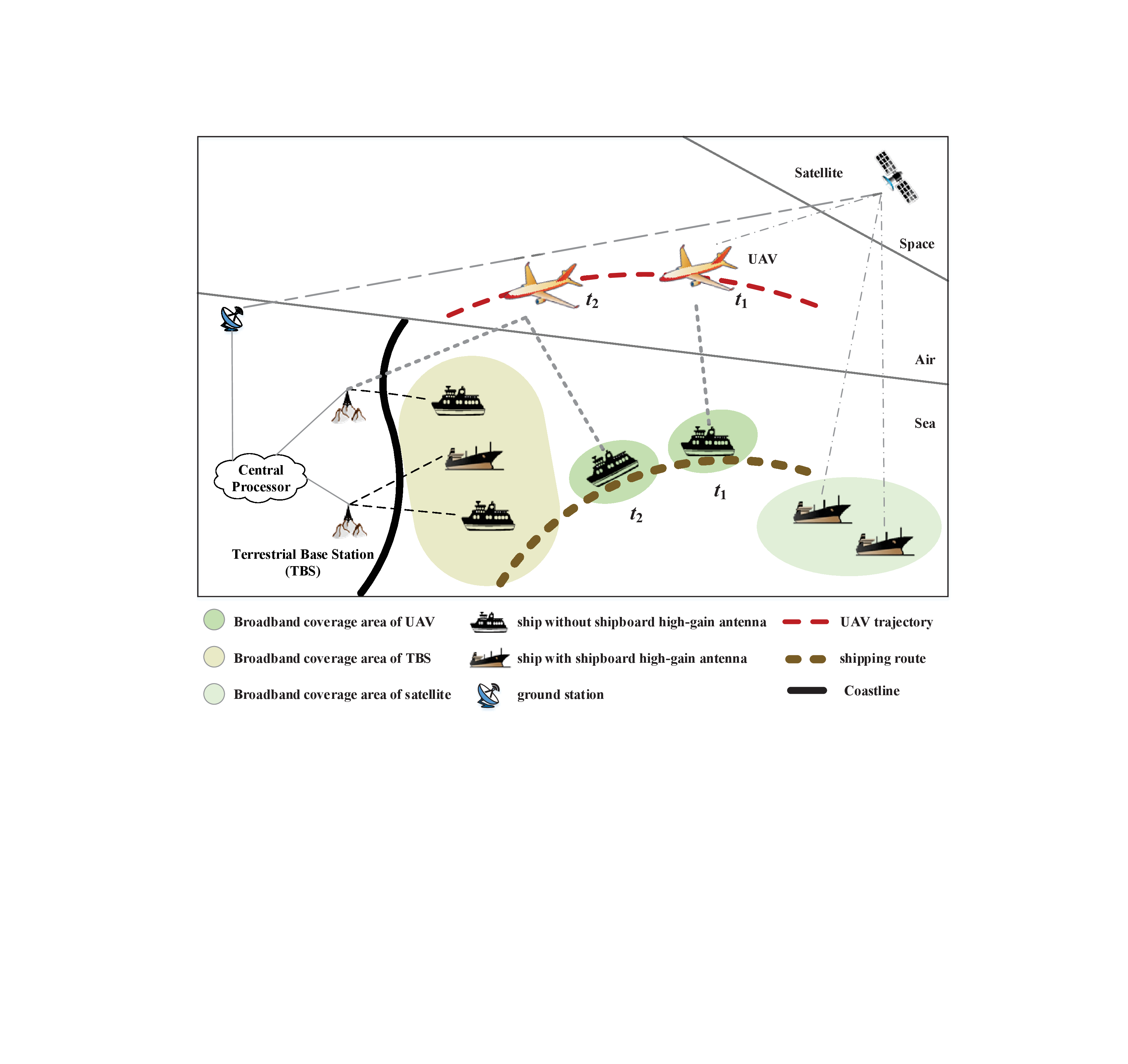}%
  \caption{Illustration of a hybrid satellite-UAV-terrestrial maritime communication network, where satellites, UAVs and TBSs provide broadband services in a coordinated manner.}\label{fig_sys_model}
\end{figure*}

To serve the mobile users on the ocean, UAVs need the wireless backhaul. Both TBSs and satellites can be used. As shown in \cite{IEEEhowto:kopka10}, when UAVs are close to the mainland, the air-to-ground backhaul is able to provide enough capacity. In this case, the TBSs nearest to UAVs could be utilized to connect UAVs to the central processor. Otherwise, satellites are used instead.

We assume that autonomous UAVs are employed as the aerial base stations. The UAVs have the abilities of dynamic mission plan, inter-cell handover, resource allocation, etc.
Let $T_0$ be the travel time during which a user is served by a UAV. Considering the user mobility, our aim is to maintain certain achievable rate to avoid severe performance degradation during the travel time. Before the UAV serves the user, the trajectory and the transmit power of the UAV are optimized to maximize the minimum ergodic rate during the travel time ${T_0}$.

The ergodic achievable rate $R_{i,j,t}$ between the $i$-th transmitter and the $j$-th receiver at time $t$ can be denoted as
\begin{equation}\label{eqn_1}
R_{i,j,t}={\bf{E}}\left\{ \log_2 \left[ {1 + \frac{{{P_{i,t}}{G_iG_j}|h_{i,j,t}|^2}}{{\sigma ^2}}} \right]\right\}
\end{equation}
where ${h_{i,j,t}} $ denotes the channel between the $i$-th transmitter and the $j$-th receiver at time $t$, and $P_{i,t}$ denotes the transmit power, and ${\sigma ^2}$ denotes the white Gaussian noise power, and $G_{i}$ denotes the gain of the transmitting antenna, and $G_{j}$ denotes the gain of the receiving antenna. The expectation is taken over the small-scale fading.

We assume that both UAVs and users are equipped with a single antenna, and UAVs are high enough to enable LOS transmission. A typical composite channel containing both large-scale and small-scale fading is employed. The channel between the $i$-th transmitter and the $j$-th receiver at time $t$ can be represented as
\begin{equation}
{h_{i,j,t}} = L_{i,j,t}^{{{{\rm{ - }}1} \mathord{\left/ {\vphantom {{{\rm{ - }}1} 2}} \right.
 \kern-\nulldelimiterspace} 2}}{\tilde h_{i,j,t}}
\end{equation}
where $L_{i,j,t}$ denotes the path loss, and ${\tilde h}_{i,j,t}$ denotes Rician fading.
Let ${d_{i,j,t}}$ denote the distance between the $i$-th transmitter and the $j$-th receiver at time $t$. We assume the earth surface to be smooth and flat\footnote{If the distances are shorter than a few tens of kilometers, it is often permissible to neglect earth curvature and assume the earth surface to be smooth and flat \cite{IEEEhowto:kopka5}.}. Then, the path loss model can be expressed as
\begin{equation}\label{eqn_10}
{L_{i,j,t}}\left( {{\rm{dB}}} \right) ={A_0} + 10\varsigma \log 10\left( {\frac{{{d_{i,j,t}}}}{{{d_0}}}} \right) + {X_{i,j,t}}
\end{equation}
where $d_0$ denotes the reference distance, and $A_0$ denotes the path loss at $d_0$, and $\varsigma$ denotes the path-loss exponent, and $X_{i,j,t}$ indicates zero-mean Gaussian random variable with standard deviation $\sigma_X$ \cite{IEEEhowto:kopka20, IEEEhowto:kopka50, IEEEhowto:kopka51}.
Rician fading can be represented as
\begin{equation}
{{\tilde h}_{i,j,t}=\sqrt {\frac{K}{{1 + K}}}  + \sqrt {\frac{1}{{1 + K}}} {g_{i,j,t}}}
\end{equation}
where ${g_{i,j,t} \in \mathcal{CN}(0,~1)}$, and $K$ indicates the Rician factor that corresponds to the ratio between the LOS power and the multipath power \cite{IEEEhowto:kopka23, IEEEhowto:kopka47, IEEEhowto:kopka48, IEEEhowto:kopka49}. The path loss and Rician fading correspond to the large-scale and small-scale fading, respectively. The path loss is location dependent. On the ocean, the ships normally travel along a fixed shipping route, and hence the positions of ships can be obtained with historical or pre-measured data. We assume that the Rician factor $K$ is available to TBSs.

\section{UAV-Aided Coverage Enhancement}
In this section, we formulate the optimization problem of the UAV trajectory and in-flight transmit power and provide an iterative algorithm to solve the optimization problem.
\subsection{Problem Formulation}
The sets of TBSs and UAVs are denoted by $\Gamma_s$ and $\Gamma_a$, respectively. The sets of the users served by UAVs and satellites are denoted by $\Omega_a$ and $\Omega_o$, respectively. Without loss of generality, we assume that the UAV is connected to a fixed TBS during the travel time ${T_0}$. The association of the UAV to TBSs is not considered in this paper. We consider a three-dimensional Cartesian coordinate system, in which the TBS is located at ${(0,\ 0,\ z_{s,t})}$, where $s \in {\Gamma_s}$. The positions of the UAV and its user at time $t$ are respectively denoted as ${{\bf{c}}_{a,t}=[x_{a,t},~y_{a,t},~z_{a,t}]^T}$ and ${{\bf{c}}_{i,t}=[x_{i,t},~y_{i,t},~z_{i,t}]^T}$, where $a \in {\Gamma_a}$ and $i \in {\Omega _a}$. We discretize the travel time ${T_0}$ into ${T}$ time slots with a step size ${\Delta t}$. We adjust the trajectory and the transmit power of UAV per time slot. The step size ${\Delta t}$ can be set according to the variation of user's positions.

Let ${\Omega '_{o,t}}$ be the set of users served by satellites but sharing the same frequency with the user served by the UAV at time $t$, and ${|\Omega '_{o,t}|=M_t}$. To avoid the interference, an interference temperature limitation $I_0$ is applied in constraints and we have
\begin{equation}
\label{eqn_22_g} {\bf{E}}\left[ {{P_{a,t}}{G_aG_j}|{h_{a,j,t}}{|^2}} \right] \le {I_0}, {j\in \Omega '_o}.
\end{equation}
The expectation is taken over the small-scale fading.

We consider the air-to-ground backhaul. Due to the wireless backhaul, the ergodic achievable rate of the UAV-to-user link cannot exceed to that of the TBS-to-UAV link. Thus, we have
\begin{equation}
\label{eqn_22_f} {R_{a,i,t}} \leq {R_{s,a,t}}.
\end{equation}
The transmission distances between UAVs and satellites are quite large, so the transmission distances can be assumed to be a constant during a short travel time. Thus, when satellites are used for the backhaul, a constant can be used as the upper bound.

The definition of the velocity and the acceleration of the fixed-wing UAV can be expressed as
\begin{flalign}
\label{eqn_22_i} &{{\bf{v}}_{a,t}} = {{\bf{\dot c}}_{a,t}},\\
\label{eqn_22_j} &{{\bf{a}}_{a,t}} = {{\bf{\ddot c}}_{a,t}}.
\end{flalign}
Because of the existing boundary conditions on the velocity and the acceleration, we have
\begin{flalign}
\label{eqn_22_b}&{\left\| {{\bf{v}}_{a,t}} \right\|_2^2}\geq {{v}}_{\min }^2, \\
\label{eqn_22_k}&{\left\| {{\bf{v}}_{a,t}} \right\|_2^2} \le {{v}}_{\max }^2, \\
\label{eqn_22_c}&{\left\| {{\bf{a}}_{a,t}} \right\|_2^2} \le {{a}}_{\max }^2
\end{flalign}
where ${v_{\min }}$ denotes the minimum velocity, and ${v_{\max }}$ denotes the the maximum velocity, and ${a_{\max }}$ denotes the maximum acceleration. Besides, considering the boundary on the height of the UAV, we have
\begin{equation}
\label{eqn_22_d} {z_{\min }} \le {z_{a,t}} \le {z_{\max }}.
\end{equation}
The lower bound in (\ref{eqn_22_d}) is used to guarantee the LOS link. The upper bound in (\ref{eqn_22_d}) is set to indicate the maximum height that the UAV can reach according to the air traffic control.

We focus on the dynamic coverage performance of the user during $T$ time slots. As the energy consumption for communications is limited, we have
\begin{equation}
\label{eqn_22_h} \sum\nolimits_{t=1}^{T}{{P_{a,t}}\Delta t} \leq {E_0}
\end{equation}
where ${E_0}$ denotes the allowable energy consumption during $T_0$. Considering the maximum transmit power $P_{\max}$, we have
\begin{equation}
\label{eqn_22_e} 0 \leq {P_{a,t}} \leq P_{\max}.
\end{equation}
The working time of the UAV is mainly determined by the fuel for flying and the battery for the communication. We assume that the fuel of the fixed-wing UAV is large enough for the trip during the travel time ${T_0}$. If the residual energy is not enough to provide services after $T_0$, multi-UAV scheduling can be employed.

According to the above analysis, the optimization problem can be formulated as
\begin{flalign}
\label{eqn_22_a} \mathop {\max }\limits_{{P_{a,t}},{{\bf{c}}_{a,t}},{{\bf{v}}_{a,t}}, {{\bf{a}}_{a,t}}}~~\mathop {\min }\limits_t~~&{R_{a,i,t}}\\
\nonumber \rm{subject\ to}~~& (\ref{eqn_22_g}), (\ref{eqn_22_f}), (\ref{eqn_22_i}), (\ref{eqn_22_j}), (\ref{eqn_22_b}), (\ref{eqn_22_k}), (\ref{eqn_22_c}), (\ref{eqn_22_d}), (\ref{eqn_22_h}), (\ref{eqn_22_e})
\end{flalign}
where the minimum ergodic achievable rate during $T$ time slots is maximized, by optimizing the UAV's transmit power, three-dimensional coordinates, velocities and accelerations during $T$ time slots.

\subsection{An Iterative Solution}
The optimization problem in (\ref{eqn_22_a}) is difficult because the expectation is taken over the Rician fading in (\ref{eqn_1}), (\ref{eqn_22_g}) and (\ref{eqn_22_f}). To solve this problem, the relationship between ergodic achievable rate ${R_{a,i,t}}$ and ${a_{a,i,t}}$ is analyzed and the result is demonstrated in the following theorem, where
\begin{equation}
{a_{a,i,t}} = {P_{a,t}}{G_a}{G_i}L_{a,i,t}^{ - 1}{\sigma ^{ - 2}}.
\end{equation}

\begin{theorem}
The ergodic achievable rate ${R_{a,i,t}}$ is strictly concave and monotonically increasing with respect to the average SNR ${a_{a,i,t}}$.
\end{theorem}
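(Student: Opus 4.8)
The plan is to reduce the statement to a one-variable property of the function $R(a) = \mathbf{E}\{\log_2(1 + a X)\}$, where I abbreviate $a = a_{a,i,t}$ and set $X = |\tilde h_{a,i,t}|^2$. By the composite channel model, $\tilde h_{a,i,t}$ is a Rician (complex Gaussian) variate, so $X$ is a nonnegative random variable with a continuous density on $(0,\infty)$; in particular $X>0$ almost surely and $\mathbf{E}\{X\}<\infty$ (indeed $\mathbf{E}\{X\}=1$). With the expectation taken only over the small-scale fading, $a_{a,i,t}$ enters solely through the average SNR $a$, so monotonicity and strict concavity of $R_{a,i,t}$ in $a_{a,i,t}$ are exactly those properties of $R(a)$ on $a>0$.

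First I would fix a realization $x\ge 0$ and examine the integrand $f(a,x)=\log_2(1+ax)$. Its first two derivatives in $a$ are $\partial_a f = \frac{1}{\ln 2}\,\frac{x}{1+ax}$ and $\partial_a^2 f = -\frac{1}{\ln 2}\,\frac{x^2}{(1+ax)^2}$, which are nonnegative and nonpositive for every $x\ge 0$, and strictly so whenever $x>0$. Thus for each $x>0$ the map $f(\cdot,x)$ is itself strictly increasing and strictly concave. I would then pass these derivatives through the expectation to get $R'(a)=\frac{1}{\ln 2}\,\mathbf{E}\{X/(1+aX)\}$ and $R''(a)=-\frac{1}{\ln 2}\,\mathbf{E}\{X^2/(1+aX)^2\}$. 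Because $X>0$ almost surely, the integrand of $R'$ is a.s.\ positive and that of $R''$ is a.s.\ negative, yielding $R'(a)>0$ (strictly increasing) and $R''(a)<0$ (strictly concave), which is the claim.

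The only step needing care is the interchange of differentiation and expectation, and I would justify it by dominated convergence. On any compact subinterval $[a_0,a_1]\subset(0,\infty)$ one has the uniform bounds $\frac{x}{1+ax}\le \frac{1}{a_0}$ and $\frac{x^2}{(1+ax)^2}\le\frac{1}{a_0^2}$, so the two difference quotients are dominated by the integrable constants $\frac{1}{a_0\ln 2}$ and $\frac{1}{a_0^2\ln 2}$; this legitimizes differentiating under $\mathbf{E}\{\cdot\}$ and makes the sign arguments conclusive. I expect this domination bound to be the sole genuinely technical point, and it is elementary here precisely because the derivatives stay uniformly bounded away from $a=0$.

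As a cross-check that sidesteps differentiation entirely, the same two conclusions follow directly from the concavity and monotonicity already established for $f(\cdot,x)$: for $a_1\neq a_2$ and $\lambda\in(0,1)$ the strict inequality $\log_2\!\big(1+(\lambda a_1+(1-\lambda)a_2)x\big)>\lambda\log_2(1+a_1x)+(1-\lambda)\log_2(1+a_2x)$ holds for every $x>0$, and since $\Pr(X>0)=1$ the strictness survives taking $\mathbf{E}\{\cdot\}$, giving strict concavity of $R$; monotonicity of the integrand in $a$ likewise transfers to $R$. Either route establishes that $R_{a,i,t}$ is strictly concave and strictly increasing in $a_{a,i,t}$, which is what we use later to invoke successive convex optimization.
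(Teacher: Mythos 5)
Your proof is correct and follows essentially the same route as the paper's Appendix A: write $R_{a,i,t}$ as an expectation over the small-scale fading, differentiate twice under the integral sign, and read off the signs of $\dot R_{a,i,t}>0$ and $\ddot R_{a,i,t}<0$. The only differences are refinements on your part --- you justify the interchange of differentiation and expectation by dominated convergence and add a derivative-free concavity cross-check, whereas the paper differentiates under the integral without comment and works explicitly with the non-central chi-square density of $|\tilde h_{a,i,t}|^2$.
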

\begin{proof}
See Appendix A.
\end{proof}

According to the monotonicity of the objective function, we equivalently simplify (\ref{eqn_22_a}) as
\begin{equation}\label{eqn_4}
\mathop {\max }\limits_{{P_{a,t}},{{\bf{c}}_{a,t}},{{\bf{v}}_{a,t}},{{\bf{a}}_{a,t}}} \;\mathop {\min }\limits_t \;\;\;\frac{{{P_{a,t}}{G_aG_i}{L_{a,i,t}^{-1}}}}{{\sigma ^2}}.
\end{equation}
Similarly, we equivalently simplify (\ref{eqn_22_f}) as
\begin{equation}\label{eqn_8}
\frac{{{P_{a,t}}{G_aG_i}{L_{a,i,t}^{-1}}}}{{\sigma ^2}} \le \frac{{{P_{s,t}}{G_sG_a}{L_{s,a,t}^{-1}}}}{{\sigma ^2}}
\end{equation} where ${P_{s,t}}$ denotes the transmit power of the TBS.
To deal with the derivatives in (\ref{eqn_22_i}) and (\ref{eqn_22_j}), by using the first-order and second-order Taylor approximations, the constraints in (\ref{eqn_22_i}) and (\ref{eqn_22_j}) can be expressed as
\begin{flalign}
&{{\bf{v}}_{a,t+1}} \approx {{\bf{v}}_{a,t}} + {{\bf{a}}_{a,t}}\Delta t,\\
& {{\bf{c}}_{a,t+1}} \approx {{\bf{c}}_{a,t}} + {{\bf{v}}_{a,t}}\Delta t + \frac{1}{2}{{\bf{a}}_{a,t}}\Delta {t^2}.
\end{flalign}
Let
\begin{flalign}
&\Delta {{\bf{v}}_t} = {{\bf{v}}_{a,t+1}} - ({{\bf{v}}_{a,t}} + {{\bf{a}}_{a,t}}\Delta t),\\
&\Delta {{\bf{c}}_t} = {{\bf{c}}_{a,t+1}} - \left( {{{\bf{c}}_{a,t}} + {{\bf{v}}_{a,t}}\Delta t + \frac{1}{2}{{\bf{a}}_{a,t}}\Delta {t^2}} \right).
\end{flalign}
We also let ${\Delta v_{w,t}}$ and ${\Delta c_{w,t}}$ respectively denote the $w$-th element in $\Delta {{\bf{v}}_t}$ and $\Delta {{\bf{c}}_t}$, where $w \in \{ 1,2,3\}$. We have
\begin{flalign}
&\label{eqn_19} \left| {\Delta {v_{w,t}}} \right| \le {\Delta {v_{0}}},\\
&\label{eqn_20} \left| {\Delta {c_{w,t}}} \right| \le {\Delta {c_{0}}}
\end{flalign}
where the thresholds $\Delta {v_{0}}$ and ${\Delta {c_{0}}}$ are set to be the small values.
According to ${g_{a,j,t} \in \mathcal{CN}(0,1)}$, we have
\begin{equation}
{\bf{E}}\left[ {{P_{a,t}}{G_aG_j}|{h_{a,j,t}}{|^2}} \right] = {P_{a,t}}{G_aG_j}{L_{a,j,t}^{-1}}.
\end{equation} Then, the constraint in (\ref{eqn_22_g}) can be rewritten as
\begin{equation}\label{eqn_23_c}
{P_{a,t}}{G_aG_j}{L_{a,j,t}^{-1}}\leq I_0.
\end{equation}
To solve the max-min problem, let
\begin{equation}
{\rm{Q}} = \mathop {\min }\limits_t~ {P_{a,t}}{G_a}{G_i}L_{a,i,t}^{ - 1}{\sigma ^{-2}}.
\end{equation}
Based on the above analysis, the problem in (\ref{eqn_22_a}) can be approximated as
\begin{subequations}\label{eqn_23}
\begin{flalign}
\mathop {\max }\limits_{{P_{a,t}},{{\bf{c}}_{a,t}},{{\bf{v}}_{a,t}},{{\bf{a}}_{a,t}},Q}\ \ \ & Q \\
\rm{subject\ to}\ \ \ & (\ref{eqn_22_b}), (\ref{eqn_22_k}), (\ref{eqn_22_c}), (\ref{eqn_22_d}), (\ref{eqn_22_h}), (\ref{eqn_22_e}),\nonumber\\
& (\ref{eqn_8}), (\ref{eqn_19}), (\ref{eqn_20}), (\ref{eqn_23_c}), \nonumber\\
\label{eqn_23_b} & Q \le \frac{{{P_{a,t}}{G_aG_i}{L_{a,i,t}^{-1}}}}{{\sigma ^2}},
\end{flalign}
\end{subequations}
In the constraints (\ref{eqn_8}), (\ref{eqn_23_c}) and (\ref{eqn_23_b}), the variables ${P_{a,t}}$ and ${{\bf{c}}_{a,t}}$ are in the numerator and denominator of the fractions, respectively. To make the analysis easy, based on the monotonicity of power functions, the constraints in (\ref{eqn_8}), (\ref{eqn_23_c}) and (\ref{eqn_23_b}) are rewritten as
\begin{flalign}
\label{eqn_21_a} & ({B_{s,t}}{P_{s,t}})^{\frac{2}{\varsigma}}{\left\| {{{\bf{c}}_{a,t}} - {{\bf{c}}_{i,t}}} \right\|_2^2}-({B_{i,t}}{P_{a,t}})^{\frac{2}{\varsigma}}{\left\| {{{\bf{c}}_{a,t}} - {{\bf{c}}_{s,t}}} \right\|_2^2}\ge 0, \\
\label{eqn_21_c} & {I_0}^{\frac{2}{\varsigma}}{\left\| {{{\bf{c}}_{a,t}} - {{\bf{c}}_{j,t}}} \right\|_2^2} \ge({B_{j,t}}{P_{a,t}})^{\frac{2}{\varsigma}},\\
\label{eqn_21_b} & Q^{\frac{2}{\varsigma}}{\left\| {{{\bf{c}}_{a,t}} - {{\bf{c}}_{i,t}}} \right\|_2^2}\le ({B_{i,t}}{P_{a,t}})^{\frac{2}{\varsigma}}
\end{flalign}
with
\begin{flalign}
&{B_{i,t}} = {G_a}{G_i}d_0^\varsigma{\sigma ^{ - 2}}{10^{ - \frac{{{A_0} + {X_{a,i,t}}}}{{10}}}},\\
&{B_{s,t}} = {G_s}{G_a}d_0^\varsigma{\sigma ^{ - 2}}{10^{ - \frac{{{A_0} + {X_{s,a,t}}}}{{10}}}},\\
&{B_{j,t}} = {G_a}{G_j}d_0^\varsigma{\sigma ^{ - 2}}{10^{ - \frac{{{A_0} + {X_{a,j,t}}}}{{10}}}}.
\end{flalign}
One can see that ${\left\| {{\bf{v}}_{a,t}}\right\|_2^2}$, ${\left\| {{\bf{a}}_{a,t}}\right\|_2^2}$, ${\left\| {{{\bf{c}}_{a,t}} - {{\bf{c}}_{i,t}}} \right\|_2^2}$ and ${\left\| {{{\bf{c}}_{a,t}} - {{\bf{c}}_{j,t}}} \right\|_2^2}$ are convex functions. The constraints in (\ref{eqn_22_k}), (\ref{eqn_22_c}) and (\ref{eqn_21_b}) indicate the convex sets with respect to ${{{\bf{v}}_{a,t}}}$, ${{{\bf{a}}_{a,t}}}$ and ${{{\bf{c}}_{a,t}}}$. The constraints in (\ref{eqn_22_b}) and (\ref{eqn_21_c}) indicate the concave sets with respect to ${{{\bf{v}}_{a,t}}}$ and ${{{\bf{c}}_{a,t}}}$.

Define the function
\begin{flalign}
f_1\left( {{\bf{c}}_{a,t}} \right) =({B_{s,t}}{P_{s,t}})^{{2 \mathord{\left/ {\vphantom {2 \varsigma }} \right.
 \kern-\nulldelimiterspace} \varsigma }}{\left\| {{{\bf{c}}_{a,t}} - {{\bf{c}}_{i,t}}} \right\|_2^2}-({B_{i,t}}{P_{a,t}})^{{2 \mathord{\left/ {\vphantom {2 \varsigma }} \right.
 \kern-\nulldelimiterspace} \varsigma }}{\left\| {{{\bf{c}}_{a,t}} - {{\bf{c}}_{s,t}}} \right\|_2^2}.
\end{flalign}
To determine the convexity of $(\ref{eqn_21_a})$, we verify the relationship between $f_1\left( {{\bf{c}}_{a,t}}\right)$ and ${{\bf{c}}_{a,t}}$ by the second-order derivatives.
We have the following theorem.
\begin{theorem}
If ${{{B_{s,t}}{P_{s,t}}\leq {B_{i,t}}{P_{a,t}}}}$, $f_1\left( {{\bf{c}}_{a,t}}\right)$ is a concave function, else if ${{{B_{s,t}}{P_{s,t}}>{B_{i,t}}{P_{a,t}}}}$, $f_1\left({{\bf{c}}_{a,t}}\right)$ is a convex function.
\end{theorem}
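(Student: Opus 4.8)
The plan is to prove the result by a direct computation of the Hessian of $f_1$ with respect to ${\bf{c}}_{a,t}$, exploiting the fact that $f_1$ is a difference of two scaled squared Euclidean norms, each of which is a simple quadratic in ${\bf{c}}_{a,t}$.

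First I would observe that for any fixed point ${\bf{p}}$, the map ${\bf{c}}_{a,t}\mapsto\left\|{\bf{c}}_{a,t}-{\bf{p}}\right\|_2^2$ expands to ${\bf{c}}_{a,t}^T{\bf{c}}_{a,t}-2{\bf{p}}^T{\bf{c}}_{a,t}+{\bf{p}}^T{\bf{p}}$, whose gradient is $2({\bf{c}}_{a,t}-{\bf{p}})$ and whose Hessian is the constant matrix $2{\bf{I}}_3$, independent of ${\bf{p}}$. Applying this to the two terms of $f_1$ with ${\bf{p}}={\bf{c}}_{i,t}$ and ${\bf{p}}={\bf{c}}_{s,t}$ respectively, and treating $({B_{s,t}}{P_{s,t}})^{2/\varsigma}$ and $({B_{i,t}}{P_{a,t}})^{2/\varsigma}$ as constants (since ${P_{a,t}}$, ${P_{s,t}}$ and the $B$ coefficients do not depend on ${\bf{c}}_{a,t}$), the Hessian of $f_1$ collapses to a scalar multiple of the identity,
\begin{equation}
\nabla^2 f_1({\bf{c}}_{a,t})=2\left[({B_{s,t}}{P_{s,t}})^{2/\varsigma}-({B_{i,t}}{P_{a,t}})^{2/\varsigma}\right]{\bf{I}}_3.
\end{equation}

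Next I would translate the sign of this scalar into the stated power-free condition. Because the path-loss exponent $\varsigma$ is positive and the quantities ${B_{s,t}}{P_{s,t}}$ and ${B_{i,t}}{P_{a,t}}$ are nonnegative, the map $x\mapsto x^{2/\varsigma}$ is monotonically increasing on $[0,\infty)$; hence $({B_{s,t}}{P_{s,t}})^{2/\varsigma}\le({B_{i,t}}{P_{a,t}})^{2/\varsigma}$ if and only if ${B_{s,t}}{P_{s,t}}\le{B_{i,t}}{P_{a,t}}$, and likewise for the strict reverse inequality. Consequently, when ${B_{s,t}}{P_{s,t}}\le{B_{i,t}}{P_{a,t}}$ the scalar factor is nonpositive, the Hessian is negative semidefinite, and $f_1$ is concave; when ${B_{s,t}}{P_{s,t}}>{B_{i,t}}{P_{a,t}}$ the scalar factor is positive, the Hessian is positive definite, and $f_1$ is convex. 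This is exactly the claim.

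There is no genuinely hard step here; the only point requiring care is the bookkeeping that the coefficients $({B_{s,t}}{P_{s,t}})^{2/\varsigma}$ and $({B_{i,t}}{P_{a,t}})^{2/\varsigma}$ are constant with respect to the differentiation variable ${\bf{c}}_{a,t}$, so that upon taking second derivatives the cross and constant terms of each quadratic vanish and only the two identical curvature matrices $2{\bf{I}}_3$ survive. Once this is noted, determining the definiteness of a scalar multiple of the identity reduces to reading off the sign of a single scalar, and the monotonicity of the power function completes the argument.
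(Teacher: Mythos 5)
Your proof is correct and follows essentially the same route as the paper: both compute the second derivative (Hessian) of $f_1$, obtain the constant $2\bigl[({B_{s,t}}{P_{s,t}})^{2/\varsigma}-({B_{i,t}}{P_{a,t}})^{2/\varsigma}\bigr]$ times the identity, and read off concavity or convexity from its sign. Your version is marginally more careful in writing the Hessian as a matrix and in making explicit the monotonicity of $x\mapsto x^{2/\varsigma}$ that links the sign of the curvature to the stated condition ${B_{s,t}}{P_{s,t}}\le{B_{i,t}}{P_{a,t}}$, a step the paper leaves implicit.
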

\begin{proof}
The second-order partial derivative of $f_1\left({{\bf{c}}_{a,t}}\right)$ with respect to ${{\bf{c}}_{a,t}}$ is
\begin{equation}
\ddot{f}_1\left({{\bf{c}}_{a,t}}\right)=2({B_{s,t}}{P_{s,t}})^{{2 \mathord{\left/ {\vphantom {2 \varsigma }} \right.
 \kern-\nulldelimiterspace} \varsigma }}- 2({B_{i,t}}{P_{a,t}})^{{2 \mathord{\left/ {\vphantom {2 \varsigma }} \right.
 \kern-\nulldelimiterspace} \varsigma }}.
\end{equation}
For any given ${B_{i,t}}$, ${B_{s,t}}$, ${P_{a,t}}$ and ${P _{s,t}}$, if ${{{B_{s,t}}{P_{s,t}}\leq {B_{i,t}}{P_{a,t}}}}$, $f_1\left({{\bf{c}}_{a,t}}\right)$ is a concave function, then we have a convex constraint in $(\ref{eqn_21_a})$. If ${{{B_{s,t}}{P_{s,t}}>{B_{i,t}}{P_{a,t}}}}$, $f_1\left({{\bf{c}}_{a,t}}\right)$ is a convex function, then we have a concave constraint in $(\ref{eqn_21_a})$.
\end{proof}

Based on the above analysis, the problem in (\ref{eqn_23}) is still non-convex due to the non-convex constraints in (\ref{eqn_22_b}), (\ref{eqn_21_a}) and (\ref{eqn_21_c}).
To make the problem in (\ref{eqn_23}) more tractable, the Taylor expansion is employed to approximate the convex functions with the linear ones. Then, we obtain the following lemma.
\begin{lemma}
For any  given ${{\bf{v}}_{a,t}^r}$ and ${{\bf{c}}_{a,t}^r}$, we have
\begin{flalign}
\label{eqn_14_a}&{\left\| {\bf{v}}_{a,t}^r \right\|_2^2} +2{{\bf{v}}_{a,t}^r}^T({{\bf{v}}_{a,t}} - {\bf{v}}_{a,t}^r)\geq v_{\min}^2,\\
\label{eqn_14_b}&({B_{s,t}}{P_{s,t}})^{\frac{2}{\varsigma}}f_{a,i,t}\geq ({B_{i,t}}{P_{a,t}})^{\frac{2}{\varsigma}}{\left\| {{{\bf{c}}_{a,t}} - {{\bf{c}}_{s,t}}} \right\|_2^2},\\
\label{eqn_14_c}&I_0^{\frac{2}{\varsigma}}f_{a,j,t}\ge({B_{j,t}}{P_{a,t}})^{\frac{2}{\varsigma}}
\end{flalign}
with
\begin{equation}
f_{a,i,t}={\left\|{{\bf{c}}_{a,t}^{r}-{\bf{c}}_{i,t}}\right\|_2^2+2{({{\bf{c}}_{a,t}^{r}-{\bf{c}}_{i,t}})}^T({{\bf{c}}_{a,t}} - {\bf{c}}_{a,t}^{r})}.
\end{equation}
\end{lemma}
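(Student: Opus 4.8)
The plan is to show that the linearized inequalities (\ref{eqn_14_a})--(\ref{eqn_14_c}) are conservative convex surrogates for the non-convex constraints (\ref{eqn_22_b}), (\ref{eqn_21_a}) and (\ref{eqn_21_c}); that is, every point satisfying the former also satisfies the latter, so the approximated program is a restriction of (\ref{eqn_23}). The single tool behind all three claims is the first-order (supporting-hyperplane) characterization of convexity: for a differentiable convex $g$ and any reference point $\mathbf{x}^r$, $g(\mathbf{x})\ge g(\mathbf{x}^r)+\nabla g(\mathbf{x}^r)^T(\mathbf{x}-\mathbf{x}^r)$, so a convex function never dips below its tangent hyperplane.

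First I would specialize this to the shifted squared norm $g(\mathbf{x})=\left\|\mathbf{x}-\mathbf{a}\right\|_2^2$, whose gradient is $2(\mathbf{x}-\mathbf{a})$, giving the master inequality $\left\|\mathbf{x}-\mathbf{a}\right\|_2^2\ge\left\|\mathbf{x}^r-\mathbf{a}\right\|_2^2+2(\mathbf{x}^r-\mathbf{a})^T(\mathbf{x}-\mathbf{x}^r)$. With $\mathbf{a}=\mathbf{0}$ and $\mathbf{x}=\mathbf{v}_{a,t}$ the right-hand side is exactly the left-hand side of (\ref{eqn_14_a}), a lower bound on $\left\|\mathbf{v}_{a,t}\right\|_2^2$; hence (\ref{eqn_14_a}) forces $\left\|\mathbf{v}_{a,t}\right\|_2^2\ge v_{\min}^2$, which is (\ref{eqn_22_b}). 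With $\mathbf{a}=\mathbf{c}_{i,t}$ (resp.\ $\mathbf{a}=\mathbf{c}_{j,t}$) and $\mathbf{x}=\mathbf{c}_{a,t}$ the right-hand side is precisely $f_{a,i,t}$ (resp.\ the analogous $f_{a,j,t}$ obtained by replacing $\mathbf{c}_{i,t}$ with $\mathbf{c}_{j,t}$), yielding $\left\|\mathbf{c}_{a,t}-\mathbf{c}_{i,t}\right\|_2^2\ge f_{a,i,t}$ and $\left\|\mathbf{c}_{a,t}-\mathbf{c}_{j,t}\right\|_2^2\ge f_{a,j,t}$.

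Next I would chain these bounds with (\ref{eqn_14_b}) and (\ref{eqn_14_c}). Because the coefficients $(B_{s,t}P_{s,t})^{2/\varsigma}$ and $I_0^{2/\varsigma}$ are nonnegative, multiplying the position bounds by them preserves the direction, so (\ref{eqn_14_b}) yields $(B_{s,t}P_{s,t})^{2/\varsigma}\left\|\mathbf{c}_{a,t}-\mathbf{c}_{i,t}\right\|_2^2\ge(B_{s,t}P_{s,t})^{2/\varsigma}f_{a,i,t}\ge(B_{i,t}P_{a,t})^{2/\varsigma}\left\|\mathbf{c}_{a,t}-\mathbf{c}_{s,t}\right\|_2^2$, which is (\ref{eqn_21_a}), and (\ref{eqn_14_c}) yields $I_0^{2/\varsigma}\left\|\mathbf{c}_{a,t}-\mathbf{c}_{j,t}\right\|_2^2\ge I_0^{2/\varsigma}f_{a,j,t}\ge(B_{j,t}P_{a,t})^{2/\varsigma}$, which is (\ref{eqn_21_c}). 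The gain is that $f_{a,i,t}$ and $f_{a,j,t}$ are affine in $\mathbf{c}_{a,t}$, so the formerly troublesome convex distance terms have been traded for linear underestimators; for instance (\ref{eqn_14_b}) now reads affine $\ge$ convex, i.e.\ convex $\le 0$, a convex constraint in $\mathbf{c}_{a,t}$.

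The step I expect to be the main obstacle is bookkeeping the inequality directions rather than any hard estimate: I must ensure that the convex term I underestimate always carries a nonnegative coefficient and sits on the larger side, so that replacing it by its tangent value keeps the surrogate a \emph{sufficient} condition. Concretely, in (\ref{eqn_21_a}) only the $\left\|\mathbf{c}_{a,t}-\mathbf{c}_{i,t}\right\|_2^2$ term may be linearized---this is exactly the term that renders $f_1$ convex in the non-convex regime $B_{s,t}P_{s,t}>B_{i,t}P_{a,t}$ flagged by Theorem~2---whereas underestimating the negatively-signed $\left\|\mathbf{c}_{a,t}-\mathbf{c}_{s,t}\right\|_2^2$ term would instead \emph{relax} the constraint and break the inclusion. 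Verifying this sign alignment for each of the three constraints is the only delicate point; once it is settled, all three conclusions drop out of the master inequality.
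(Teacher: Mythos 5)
Your proposal is correct and follows essentially the same route as the paper's Appendix B: both rest on the supporting-hyperplane inequality $g(\mathbf{x})\ge g(\mathbf{x}^r)+\nabla g(\mathbf{x}^r)^T(\mathbf{x}-\mathbf{x}^r)$ applied to $\left\| {\bf{v}}_{a,t}\right\|_2^2$ and $\left\| {{\bf{c}}_{a,t}}-{{\bf{c}}_{i,t}}\right\|_2^2$, followed by substitution into (\ref{eqn_22_b}), (\ref{eqn_21_a}) and (\ref{eqn_21_c}). Your write-up is in fact more careful than the paper's one-line proof about which term may be linearized and why the resulting surrogates are sufficient (inner) approximations, but the underlying argument is identical.
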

\begin{proof}
See Appendix B.
\end{proof}

\begin{table}[!t]
\renewcommand{\arraystretch}{1.2}
\label{table_1}
  \centering
  \caption{Successive Convex Optimization of Trajectory and Transmit Power.}
  \begin{tabular}{p{0.001\columnwidth} p{0.001\columnwidth} p{0.001\columnwidth} p{0.001\columnwidth} p{0.83\columnwidth}}
    \hline
      \hline
  \multicolumn{5}{l}{\textbf{Initialization:}} \\
  &\multicolumn{4}{p{0.9\columnwidth}}{${{\bf{c}}_{a,t}^0}$, ${{\bf{v}}_{a,t}^0}$, ${\varepsilon  = 1.0 \times {10^{ - 3}}}$, $L_0=50$, $Q^{0}=0$,}\\
  \multicolumn{5}{l}{\textbf{FOR} $l=1$ \textbf{TO} $l=L_0$} \\
      &\multicolumn{4}{p{0.9\columnwidth}}{\hangafter 1 \hangindent 10pt 1)~Solve the problem in (\ref{eqn_7}) for given ${{\bf{c}}_{a,t}^{l-1}}$ and ${{\bf{v}}_{a,t}^{l-1}}$, then
      denote the optimal solution as ${P_{a,t}^{l}}$, ${{\bf{c}}_{a,t}^{l}}$, ${{\bf{v}}_{a,t}^{l}}$, ${{\bf{a}}_{a,t}^{l}}$, $Q^l$,}\\
      &\multicolumn{4}{l}{2)\ \ If ${{{\left| {{Q^l} - {Q^{l - 1}}} \right|} \mathord{\left/ {\vphantom {{\left| {{Q^l} - {Q^{l- 1}}} \right|} {{Q^{l}}}}} \right. \kern-\nulldelimiterspace} {{Q^{l}}}} < \varepsilon}$, stop.} \\
  \multicolumn{5}{l}{\textbf{END}} \\
  \hline
\end{tabular}
\end{table}

According to Lemma 1, we can iteratively solve the problem by using the successive convex optimization. The details are given in Table I. In the $l$-th iteration, by using ${{\bf{v}}_{a,t}^{l-1}}$ and ${{\bf{c}}_{a,t}^{l-1}}$ obtained in the $(l-1)$-th iteration, the optimization problem can be formulated as
\begin{flalign}\label{eqn_7}
\mathop {\max }\limits_{{P_{a,t}^l},{{\bf{c}}_{a,t}^l},{{\bf{v}}_{a,t}^l},{{\bf{a}}_{a,t}^l},Q^l}\ \ \ &Q^l \\
\rm{subject\ to}\ \ \ & (\ref{eqn_22_k}), (\ref{eqn_22_c}), (\ref{eqn_22_d}), (\ref{eqn_22_h}), (\ref{eqn_22_e}), (\ref{eqn_19}), (\ref{eqn_20}), (\ref{eqn_21_b}), (\ref{eqn_14_a}), (\ref{eqn_14_b}), (\ref{eqn_14_c})\nonumber
\end{flalign}
In constraints, the superscript $l$ is used for ${P_{a,t}},{{\bf{c}}_{a,t}},{{\bf{v}}_{a,t}}$, ${{\bf{a}}_{a,t}}$, and $Q$, respectively. Besides, ${{\bf{v}}_{a,t}^r}$ and ${{\bf{c}}_{a,t}^r}$ are replaced with ${{\bf{v}}_{a,t}^{l-1}}$ and ${{\bf{c}}_{a,t}^{l-1}}$, respectively.

\begin{figure}[!t]
  \centering
  \includegraphics[width=4 in]{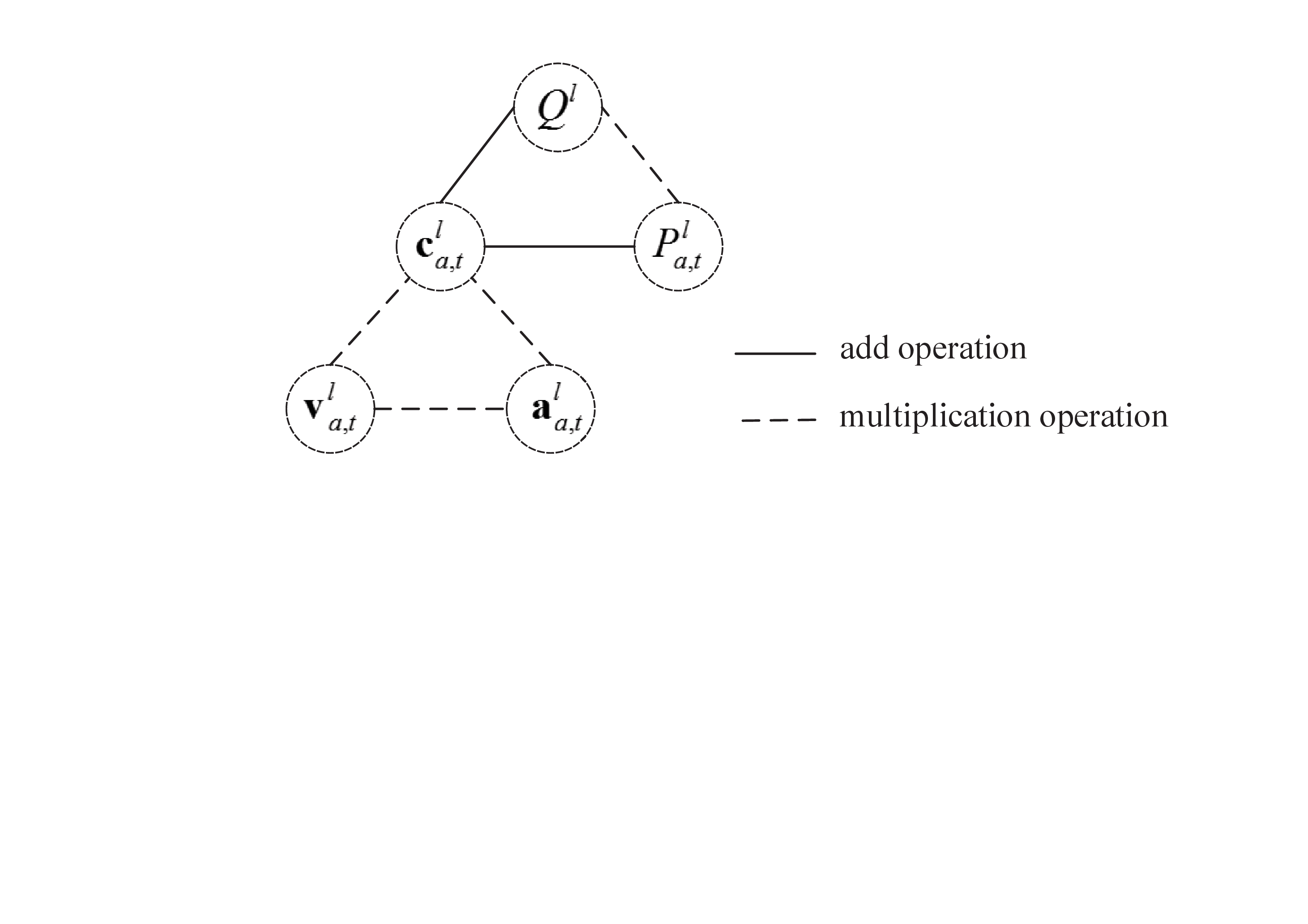}%
  \caption{Coupling relationships between the variables of the problem in (\ref{eqn_7}).}\label{fig_2}
\end{figure}

In (\ref{eqn_7}), the variables $Q^l$, ${P^l_{a,t}}$ and ${{\bf{c}}^l_{a,t}}$ are closely related to each other because of multiplication operations, as shown in Fig. \ref{fig_2}. Consequently, ${{\bf{c}}^l_{a,t}}$ cannot be obtained together with $Q^l$ and ${P^l_{a,t}}$. Geometric programming can be employed to transform the multiplication operation into add one, so that ${P^l_{a,t}}$ and ${{\bf{c}}^l_{a,t}}$ can be solved together. But it provides a tight bound. Therefore, we decouple the problem in (\ref{eqn_7}) into two subproblems, and solve it iteratively, as shown in Table II. First, with given ${{\bf{c}}^l_{a,t}}$, we optimize ${P^l_{a,t}}$. Then, with the obtained ${P^l_{a,t}}$, we optimize ${{\bf{c}}^l_{a,t}}$. In addition, due to the linear relationship, ${{\bf{c}}^l_{a,t}}$, ${{\bf{v}}^l_{a,t}}$ and ${{\bf{a}}^l_{a,t}}$ are solved together in this paper. Two subproblems are described as follow.

\subsubsection{Optimization of transmit power}
By using ${{\bf{c}}_{a,t}^{l-1}}$ obtained in the $(l-1)$-th iteration, we set ${{\bf{c}}_{a,t}^{l}={\bf{c}}_{a,t}^{l-1}}$, and optimize the transmit power ${P_{a,t}^l}$ by solving the following problem
\begin{flalign}\label{eqn_6}
\mathop {\max }\limits_{{P_{a,t}^l},Q^l}\ \ \ & Q^l \\
\nonumber \rm{subject\ to}\ \ \ &  (\ref{eqn_22_h}), (\ref{eqn_22_e}), (\ref{eqn_21_b}), (\ref{eqn_14_b}), (\ref{eqn_14_c}).
\end{flalign}
The problem in (\ref{eqn_6}) is a LP, which can be solved with CVX \cite{IEEEhowto:kopka21}.

\begin{table}[!t]
\renewcommand{\arraystretch}{1.2}
\label{table_4}
  \centering
  \caption{Successive Convex Optimization and Decoupling of Trajectory and Transmit Power.}
  \begin{tabular}{p{0.001\columnwidth} p{0.001\columnwidth} p{0.001\columnwidth} p{0.001\columnwidth} p{0.83\columnwidth}}
    \hline
      \hline
  \multicolumn{5}{l}{\textbf{Initialization:}} \\
  &\multicolumn{4}{p{0.9\columnwidth}}{${{\bf{c}}_{a,t}^0}$, ${{\bf{v}}_{a,t}^0}$, ${\varepsilon  = 1.0 \times {10^{ - 3}}}$, $L_0=50$, $Q^{0}=0$,}\\
  \multicolumn{5}{l}{\textbf{FOR} $l=1$ \textbf{TO} $l=L_0$} \\
      &\multicolumn{4}{p{0.9\columnwidth}}{\hangafter 1 \hangindent 10pt 1)~Solve the problem in (\ref{eqn_6}) for given ${{\bf{c}}_{a,t}^{l}={\bf{c}}_{a,t}^{l-1}}$, then denote the optimal solution as ${P_{a,t}^{l}}$,}\\
      & \multicolumn{4}{p{0.9\columnwidth}}{2)~Solve the problem in (\ref{eqn_2}) with given ${{\bf{c}}_{a,t}^{l- 1}}$, ${{\bf{v}}_{a,t}^{l- 1}}$, and ${P_{a,t}^{l}}$, and denote the optimal solutions as ${{\bf{c}}_{a,t}^{l}}$,${{\bf{v}}_{a,t}^{l}}$,${{\bf{a}}_{a,t}^{l}}$, $Q^l$,}\\
      &\multicolumn{4}{l}{3)\ \ If ${{{\left| {{Q^l} - {Q^{l - 1}}} \right|} \mathord{\left/ {\vphantom {{\left| {{Q^l} - {Q^{l- 1}}} \right|} {{Q^{l}}}}} \right. \kern-\nulldelimiterspace} {{Q^{l}}}} < \varepsilon}$, stop.} \\
  \multicolumn{5}{l}{\textbf{END}} \\
  \hline
\end{tabular}
\end{table}

\subsubsection{Optimization of three-dimensional coordinates, velocities and accelerations}
By using the obtained ${P^l_{a,t}}$, ${{\bf{c}}^{l-1}_{a,t}}$ and ${{\bf{v}}^{l-1}_{a,t}}$, the problem in (\ref{eqn_7}) can be rewritten as
\begin{flalign}\label{eqn_2}
\mathop {\max }\limits_{{\bf{c}}^l_{a,t},{\bf{v}}^l_{a,t},{\bf{a}}^l_{a,t},Q^l}\ \ \ & Q^l \\
\nonumber \rm{subject\ to}\ \ \ & (\ref{eqn_22_k}), (\ref{eqn_22_c}), (\ref{eqn_22_d}), (\ref{eqn_19}), (\ref{eqn_20}), (\ref{eqn_21_b}), (\ref{eqn_14_a}), (\ref{eqn_14_b}), (\ref{eqn_14_c})
\end{flalign}
Then, we can iteratively solve the problem in (\ref{eqn_7}) by employing successive convex optimization.

\begin{table}[!t]
\renewcommand{\arraystretch}{1.2}
\label{table_3}
  \centering
  \caption{Bisection Method for Solving the Problem (\ref{eqn_2}).}
  \begin{tabular}{p{0.001\columnwidth} p{0.001\columnwidth} p{0.001\columnwidth} p{0.001\columnwidth} p{0.83\columnwidth}}
    \hline
      \hline
        \multicolumn{5}{l}{\textbf{Initialization:}} \\
  &\multicolumn{4}{p{0.9\columnwidth}}{1)~${\varepsilon  = 1.0 \times {10^{ - 3}}}$, $M_0=50$,}\\
      & \multicolumn{4}{p{0.9\columnwidth}}{2)~Set ${{U^0} ={P^l_{a,t}}{B_{i,t}}z_{\min }^{ - \varsigma}}$,}\\
      \multicolumn{4}{l}{\textbf{FOR} $m=1$ \textbf{TO} $m=M_0$} \\
         & \multicolumn{3}{p{0.9\columnwidth}}{3)\ ${Q^m = {{\left( {{U^{m - 1}} + {L^{m - 1}}} \right)} \mathord{\left/ {\vphantom {{\left( {{U^{m - 1}} + {L^{m - 1}}} \right)} 2}} \right.
 \kern-\nulldelimiterspace} 2}}$,} \\
        & \multicolumn{2}{p{0.9\columnwidth}}{\hangafter 1 \hangindent 10pt 4)\ Solve the convex problem in (\ref{eqn_28}) with given ${{\bf{c}}_{a,t}^{l- 1}}$, ${{\bf{v}}_{a,t}^{l- 1}}$, ${P_{a,t}^{l}}$ and $Q^m$, and denote the optimal solutions as ${{\bf{c}}_{a,t}^{m}}$, ${{\bf{v}}_{a,t}^{m}}$, ${{\bf{a}}_{a,t}^{m}}$,} \\
         & \multicolumn{3}{p{0.9\columnwidth}}{\hangafter 1 \hangindent 10pt 5)\ If the problem is solved, ${{U^m} = {U^{m - 1}}}$, ${{L^m} = Q^m}$; otherwise ${{U^m} = Q^m}$, ${{L^m} = {L^{m - 1}}}$,}\\
         & \multicolumn{3}{p{0.9\columnwidth}}{6)\ If ${{\left| {{U^m} - {L^m}} \right|} \mathord{\left/{\vphantom {{\left| {{U^m} - {L^m}} \right|} {{L^m}}}} \right. \kern-\nulldelimiterspace} {{L^m}}} < \varepsilon $, stop,}\\
      \multicolumn{4}{l}{\textbf{END}} \\
\multicolumn{4}{l}{7)~~${Q^l=Q^m}$,} \\
      \multicolumn{4}{l}{8)~~${{\bf{c}}_{a,t}^{l}} ={{\bf{c}}_{a,t}^{m}}$, ${{\bf{v}}_{a,t}^{l}} ={{\bf{v}}_{a,t}^{m}}$, ${{\bf{a}}_{a,t}^{l}} ={{\bf{a}}_{a,t}^{m}}$.} \\
  \hline
\end{tabular}
\end{table}

Similarly, to solve the problem in (\ref{eqn_2}), the bisection method is utilized to decouple $Q^l$ and ${{\bf{c}}^l_{a,t}}$. We decompose the problem in (\ref{eqn_2}) into a series of convex problems by setting ${Q^l}$, and solve it iteratively. The details are shown in Table III. In the $m$-th iteration, let ${U^{m - 1}}$ and ${L^{m - 1}}$ respectively denote the upper bound and lower bound of ${Q^l}$. For ${Q^m = {{\left( {{U^{m - 1}} + {L^{m - 1}}} \right)} \mathord{\left/ {\vphantom {{\left( {{U^{m - 1}} + {L^{m - 1}}} \right)} 2}} \right.
 \kern-\nulldelimiterspace} 2}}$, with given ${{\bf{c}}_{a,t}^{l-1}}$, ${{\bf{v}}_{a,t}^{l-1}}$ and ${P^{l}_{a,t}}$ obtained by solving the problem in (\ref{eqn_6}), the convex problem can be formulated as
\begin{flalign}\label{eqn_28}
\textrm{find}\ \ \ &  {{\bf{c}}_{a,t}^{m}},{{\bf{v}}_{a,t}^{m}},{{\bf{a}}_{a,t}^{m}}\\
\nonumber \rm{subject\ to}\ \ \ & (\ref{eqn_22_k}), (\ref{eqn_22_c}), (\ref{eqn_22_d}), (\ref{eqn_19}), (\ref{eqn_20}), (\ref{eqn_21_b}), (\ref{eqn_14_a}), (\ref{eqn_14_b}), (\ref{eqn_14_c})
\end{flalign}
where ${{P_{a,t}},{{\bf{c}}_{a,t}},{{\bf{v}}_{a,t}},{{\bf{a}}_{a,t}},Q}$ are replaced with ${{P_{a,t}^l},{{\bf{c}}_{a,t}^m},{{\bf{v}}_{a,t}^m},{{\bf{a}}_{a,t}^m},Q^m}$, respectively. Besides, ${{\bf{v}}_{a,t}^r}$ and ${{\bf{c}}_{a,t}^r}$ are replaced with ${{\bf{v}}_{a,t}^{l-1}}$ and ${{\bf{c}}_{a,t}^{l-1}}$, respectively.
When the maximum $Q^m$ is found, with which the convex problem (\ref{eqn_28}) is solved, we achieve the related vectors ${{\bf{c}}_{a,t}^{m}},{{\bf{v}}_{a,t}^{m}},{{\bf{a}}_{a,t}^{m}}$.
The shortest distance between the UAV and the mobile user is $z_{\min}$. Given ${P^l_{a,t}}$, we set the upper bound of ${Q^1}$ to be
\begin{equation}
{{U^0} ={P^l_{a,t}}{B_{i,t}}z_{\min }^{ - \varsigma}}.
\end{equation}
The lower bound of ${Q^1}$ is set to be 0.

\section{Simulation Results and Discussion}
In this section, simulation is performed to validate the performance of our proposed algorithm. The TBS connected to the UAV is located at ${(0,~0,~100)}$ m. The UAV provides the communication services for the mobile user when the user travels from the position ${(5.0\times 10^4,~0,~10 )}$ m to ${(6.8\times 10^4,~0,~10)}$ m along x axis. We uniformly sample $T=10$ points from the positions of the user for simple analysis. The UAV flies according to the optimized trajectory. The users served by satellites and interfered by the UAV appear randomly. The interference from the UAV seriously affects the nearest users served by satellites. So without loss of generality, we set $M_t=1$. The antenna gains of the TBS and the UAV are set to be $12$ dBi and $8$ dBi. The antenna gains of the users served by the UAV and satellites are set to be $8$ dBi and $30$ dBi. The system is operated at the 5GHz carrier frequency.
The path loss is set to be
\begin{flalign}\label{eqn_60}
{L_{i,j,t}}\left( {{\rm{dB}}} \right) ={116.7} + 15 \log 10\left( {\frac{{{d_{i,j,t}}}}{{{2600}}}} \right) + {X_{i,j,t}}
\end{flalign}
where the standard deviation of $X_{i,j,t}$ is 0.1. The main parameters are given in Table IV.
For each experiment, we randomly generate the small-scale fading for 1000 rounds to achieve ergodic achievable rates according to the parameters given in Table IV.

\begin{table}[!t]
\renewcommand{\arraystretch}{1.2}
\label{table_2}
  \centering
  \caption{Simulation Parameters.}
\begin{tabular}{|l|l|l|l|}
  \hline
  \textbf{Symbol}       & \textbf{Value}                          & \textbf{Symbol}         & \textbf{Value}\\ \hline
  $z_{\min}$  & 2.6~\textrm{km}                & $v_{\min}$   & 10~\textrm{m/s}   \\ \hline
  $z_{\max}$  & 5.0~\textrm{km}                & $v_{\max}$   & 60~\textrm{m/s}   \\ \hline
  $v_i$       & 30~\textrm{m/s}                & $P_{s,t}$    & 40~\textrm{dBm}   \\ \hline
  $\sigma ^2$ & -107~\textrm{dBm}              & $a_{\max}$   & $10~\textrm{m/s}^2$  \\
  \hline
\end{tabular}
\end{table}

\subsection{Performance Comparison among Different Algorithms}
In this part, we compare our proposed algorithm with those in \cite{IEEEhowto:kopka3} and \cite{IEEEhowto:kopka14}. In these works, the full CSI was used for the whole trajectory optimization. Let ${\bf{c}}_{i,t}=[x_{i,t},~y_{i,t},~z_{i,t}]^T$ be the positions of the user served by the UAV, and $v_i$ be the user's velocity. For comparison, we adopt a basic trajectory which is denoted as ${\bf{c}}_{i,t}=[x_{i,t},~y_{i,t},~z_{\min}]^T$. The transmit power is set to satisfy the constraints on tolerable interference, backhaul, maximum transmit power and the total communication energy of the UAV. Besides, the positions of the users served by satellites are set as ${{\bf{c}}_{j,t}=[x_{i,t},~y_{i,t}+(-1)^t\times8000,~z_{i,t}]^T}$ with $v_i=30 m/s$. The initial trajectory of the UAV is set to be ${\bf{c}}_{a,t}=[x_{i,t}/2,~y_{i,t},~z_{\min}]^T$.

\begin{figure*}[!t]
  \centering
  \includegraphics[width=5 in]{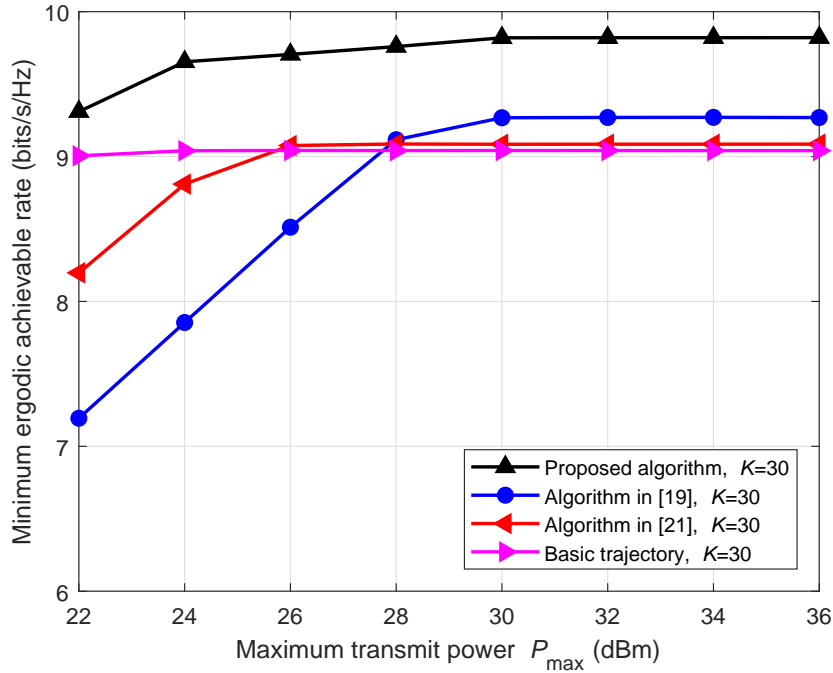}
  \caption{Minimum ergodic achievable rate of different algorithms with Rician factor $K=30$, the interference temperature limitation $I_0={-40}$ dBm and the total communication energy $E_0=500$~J.}\label{fig_8}
\end{figure*}
\begin{figure*}[!t]
  \centering
  \includegraphics[width=5 in]{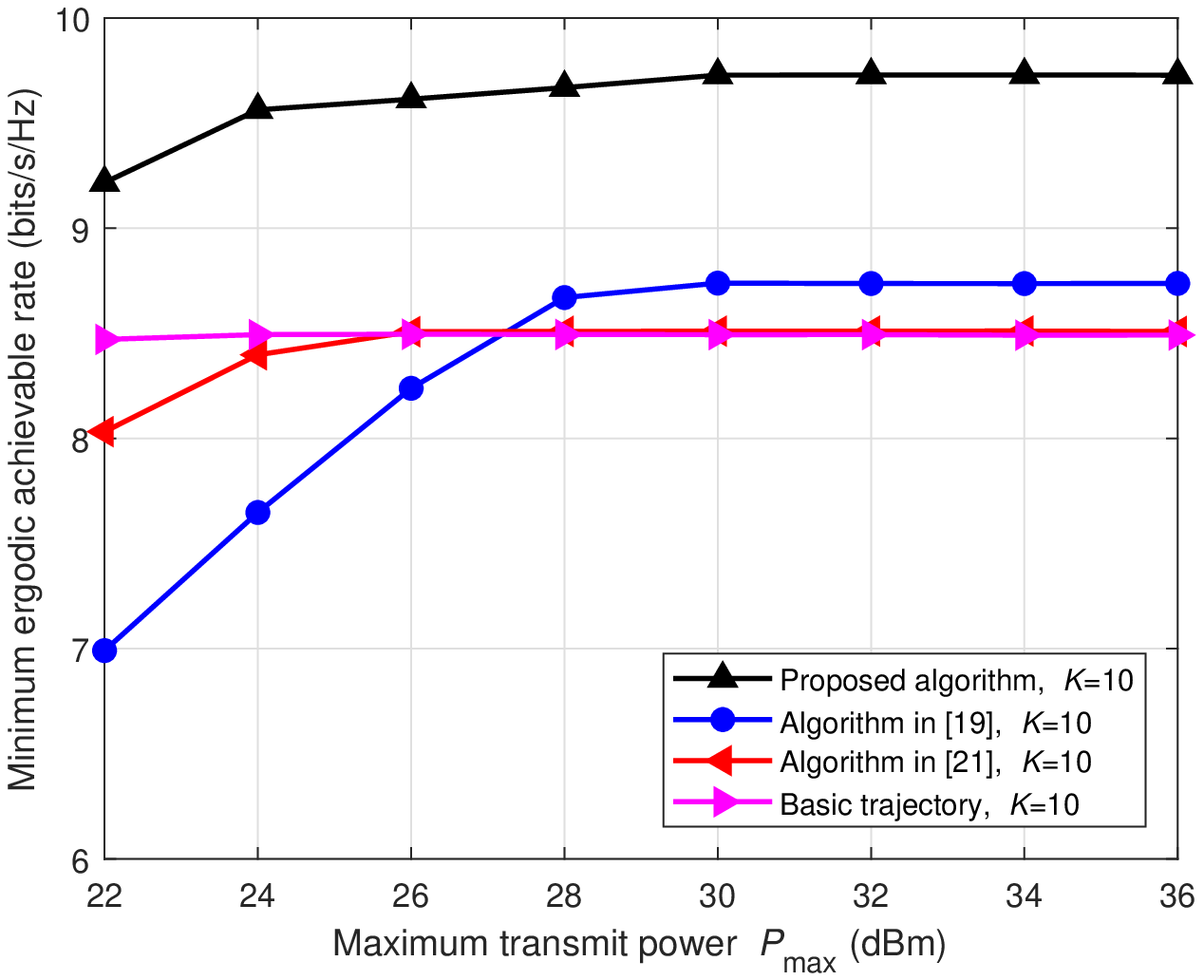}
  \caption{Minimum ergodic achievable rate of different algorithms with Rician factor $K=10$, the interference temperature limitation $I_0={-40}$ dBm and the total communication energy $E_0=500$~J.}\label{fig_12}
\end{figure*}

Because of the difficulty of obtaining the small-scale CSI, the full CSI can not be accurately obtained in practice. In our proposed algorithm, the whole trajectory and the transmit power of the UAV are optimized with the large-scale CSI only. To validate the performance of our proposed algorithm, the minimum ergodic achievable rate of different algorithms is compared. The simulation results are shown in Fig. \ref{fig_8}, where $E_0$ is 500 J. We set that the interference temperature limitation $I_0$ is $-40$~dBm, and vary maximum transmit power $P_{\max}$ in the range $[22,~36]$ dBm. Because $I_0$ is large, the interference can be ignored. The transmit power is bounded by the maximum transmit power, backhaul and total communication energy. When $P_{\max}\leq 30$ dBm, the performance is mainly determined by backhaul and maximum transmit power. The existing algorithms ignore the constraint of maximum transmit power. We decrease their transmit power to satisfy this constraint. One sees that the performance can be improved with the optimization problem subject to the constraint of maximum transmit power. When $P_{\max}\geq 30$ dBm, the total transmit power during $T$ is larger than the total communication energy, and the performance is mainly determined by backhaul and total communication energy. The algorithm in \cite{IEEEhowto:kopka3} investigated the optimization problem with full CSI subject to constraints of backhaul and total communication energy. Our proposed algorithm achieves better performance than that in \cite{IEEEhowto:kopka3}. To further validate the performance of our proposed algorithm using the large-scale CSI, we vary Rician factor $K$. The simulation results are shown in Fig. \ref{fig_12}. One sees that by reducing $K$, our proposed algorithm obtains much better performance than the existing ones. One sees that the performance can be improved with the large-scale CSI.

\begin{figure}[!t]
  \centering
   \includegraphics[width=5 in]{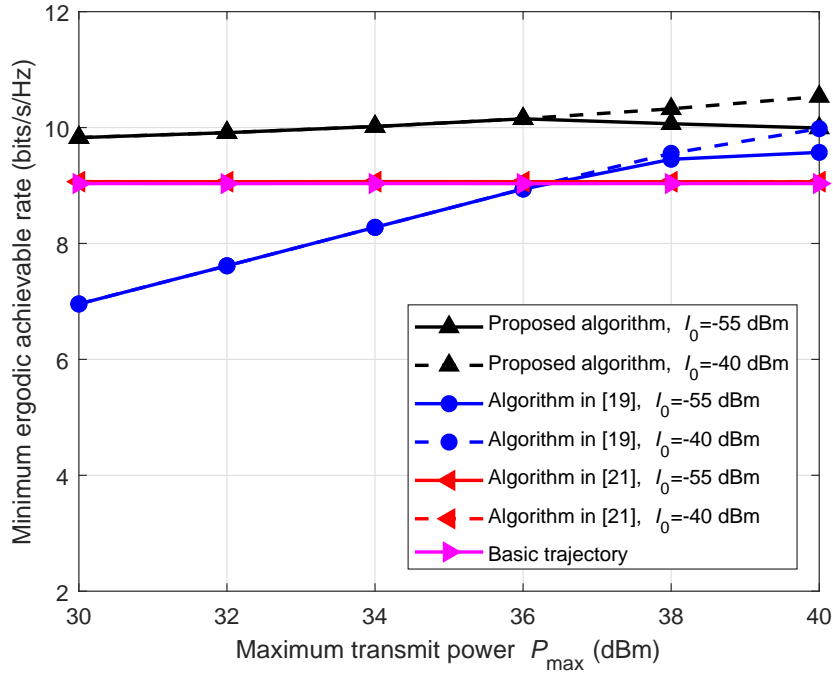}%
  \caption{Minimum ergodic achievable rate of different algorithms with the interference temperature limitation $I_0={-55}$~dBm or ${-40}$ dBm and the total communication energy $E_0=3\times10^4$~J.}\label{fig_7}
\end{figure}

To illustrate the performance gain achieved by using interference constraint, the comparison of minimum ergodic achievable rate is shown in Fig. \ref{fig_7}, where $K=31.3$. We set $E_0=3\times10^4$ J. Because $E_0$ is large, the transmit power is limited by interference, maximum transmit power and backhaul. We set that the interference temperature limitation $I_0$ is $-55$~dBm and $-40$~dBm, and vary maximum transmit power $P_{\max}$ in the range $[30,~40]$ dBm. When $I_0=-40$~dBm, the interference can be ignored. The algorithms in \cite{IEEEhowto:kopka3} and \cite{IEEEhowto:kopka14} neglect the constraints of interference and maximum transmit power. We reduce their transmit power to satisfy those constraints. By varying $I_0$ and $P_{\max}$, the minimum ergodic achievable rate is increased when $P_{\max}\geq 36$ dBm. One sees that the transmit power is determined by interference constraint when $P_{\max}\geq 36$ dBm and $I_0=-55$~dBm. The performance of our proposed algorithm is best of all when $P_{\max}\geq 36$ dBm and $I_0=-55$~dBm. Thus, our proposed algorithm can improve minimum ergodic achievable rate by a joint optimization of the whole trajectory and transmit power with interference constraints.

\subsection{Discussion on the Impact of Key Parameters}
\begin{figure}[!t]
  \centering
   \includegraphics[width=5 in]{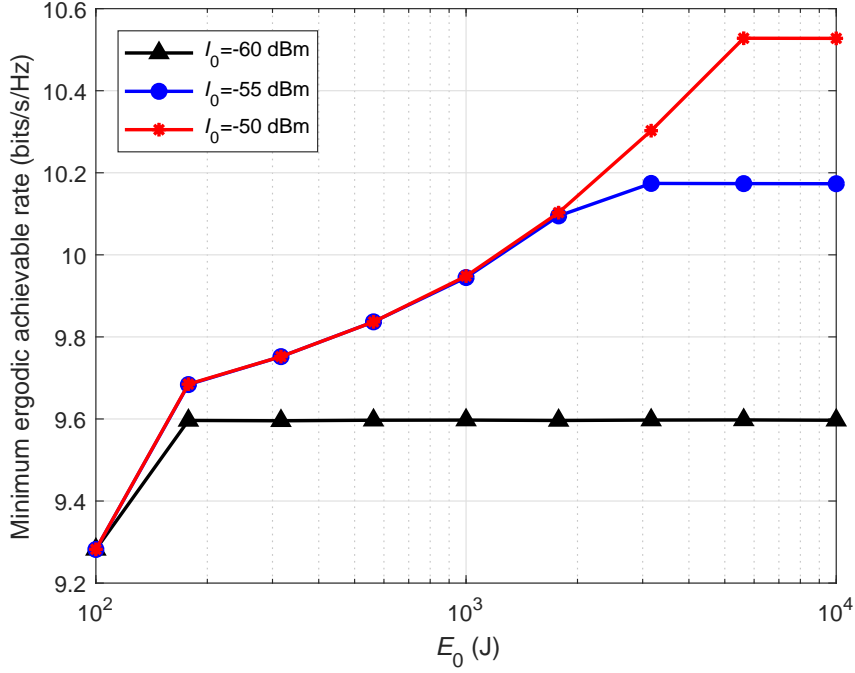}%
  \caption{Minimum ergodic achievable rate with different interference temperature limitation $I_0$.}\label{fig_9}
\end{figure}
\begin{figure}[!t]
  \centering
   \includegraphics[width=5 in]{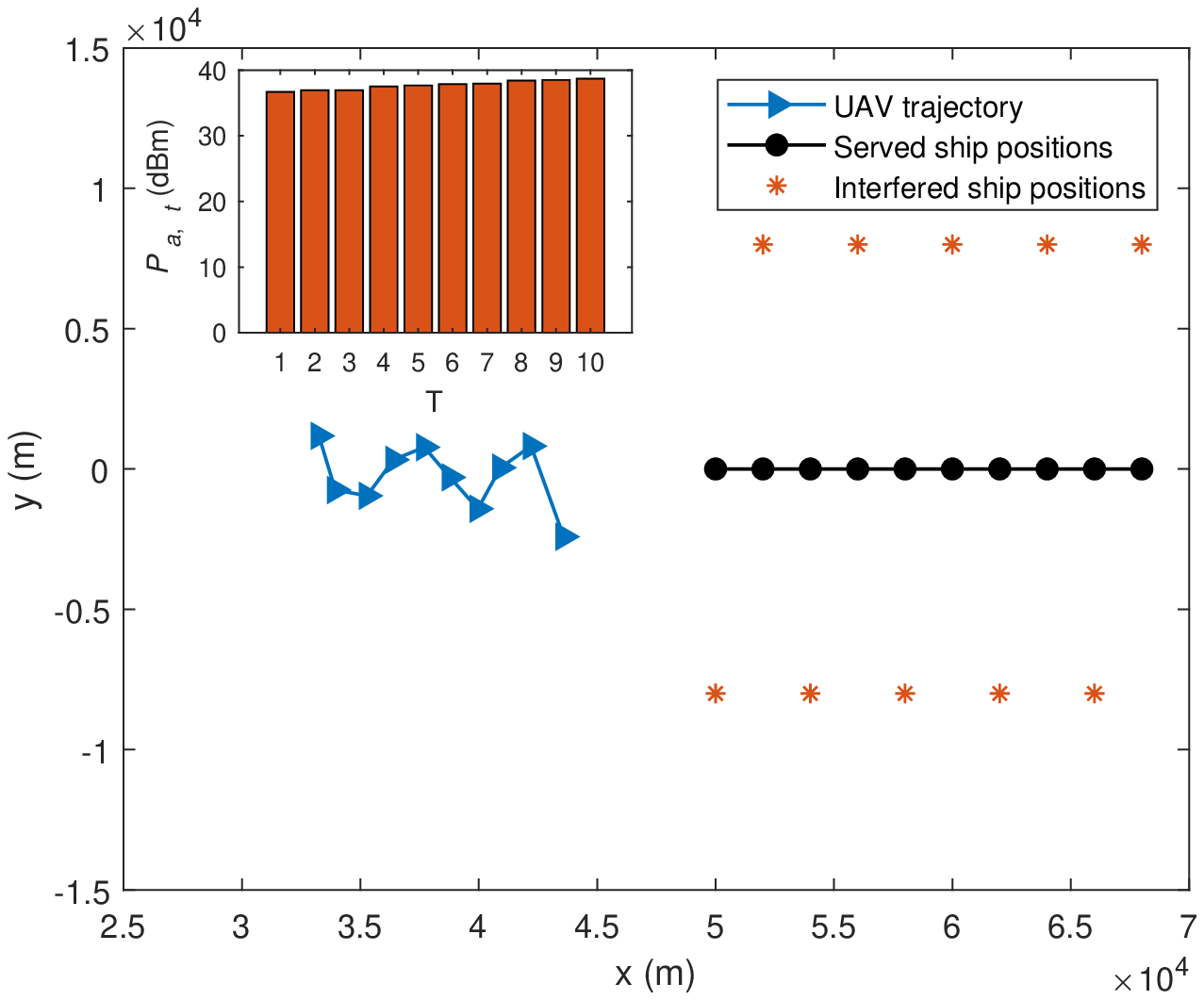}%
  \caption{Optimized trajectory in the x-y plane.}\label{fig_10}
\end{figure}

In this part, we analyze the impact of total energy and the interference on minimum ergodic achievable rate. Set maximum transmit power $P_{\max}=40$ dBm. The simulation results are shown in Fig. \ref{fig_9}, where the total energy $E_0$ is in the range $[100,~10000]$~J. The interference temperature limitation $I_0$ is set to be $-60$~dBm, $-55$~dBm and $-50$~dBm, respectively. The initial trajectory of UAV is $[x_{i,t},~y_{i,t},~z_{\min}]^T$,  $[3x_{i,t}/4,~y_{i,t},~z_{\min}]^T$ and $[x_{i,t}/2,~y_{i,t},~z_{\min}]^T$, respectively.  When $I_0$ and $E_0$ are increased, better performance can be obtained. When the energy constraint is tight, the performance is determined by $E_0$. By increasing $E_0$, when the interference constraint is tight, the performance is determined by $I_0$.

An optimized trajectory in the x-y plane is shown in Fig. \ref{fig_10}, where $P_{\max}=40$ dBm, $I_0=-55$ dBm, $E_0=4000$ J, and $K=31.3$. Because of constraints of wireless backhaul, the optimized trajectory is between TBS and the mobile user. Besides, the optimized trajectory was bent to satisfy interference constraints. The obtained transmit power of UAV satisfies the constraints of maximum transmit power and total communication energy.

\begin{figure}[!t]
  \centering
  \includegraphics[width=5 in]{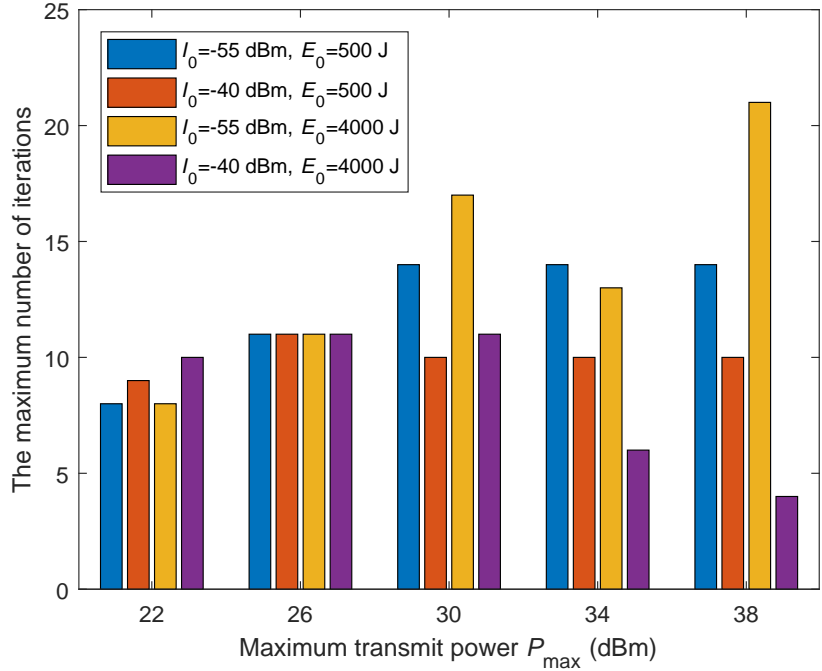}%
  \caption{Maximum number of iterations.}\label{fig_11}
\end{figure}

\subsection{Convergence Performance of the Proposed Algorithm}
The convergence is analyzed in this part. The experiment is implemented 100 rounds by generating different scenes. In each scene, the users served by satellites and interfered by the UAV appear randomly. The distance between the users served by satellites and the one served by UAV is 8000 m. The maximum numbers of iterations are shown in Fig. \ref{fig_11}, where maximum transmit power $P_{\max}$ in the range $[22,~38]$ dBm, the interference temperature limitation $I_0$ is $-55$~dBm and $-40$~dBm, and the total energy $E_0$ is 500~J and 4000~J. One sees that, the maximum number of iterations is smaller than 21. Thus, the algorithm converges within $21$ iterations.

\section{Conclusions}
In this paper, UAVs have been used for on-demand satellite-terrestrial maritime communications. The coordination with existing satellites/terrestrial systems has been investigated to realize spectrum sharing and efficient backhaul. This paper has adopted a typical composite channel model consisting of both large-scale and small-scale fading, under which UAVs have been deployed for accompanying coverage.
The UAV's whole trajectory and transmit power during the fight have been jointly optimized, subject to constraints on UAV kinematics, tolerable interference, backhaul, and the total communication energy of the UAV. Different from previous studies, we have assumed that only the large-scale CSI is available, as the positions of mobile ships can be obtained via the maritime AIS and be used as the prior information. Then, we have solved the non-convex problem by problem decomposition, successive convex optimization and bisection searching tools. Simulation results have shown that the UAV fits well with existing satellite and terrestrial systems. Besides, the performance gain can be achieved via joint optimization of the UAV's trajectory and transmit power with only the large-scale CSI.

\appendices
\section{Proof of Theorem 1}
Since ${g_{a,i,t} \in \mathcal{CN}(0,1)}$, the average SNR can be achieved and denoted as
\begin{equation}
{\bf{E}}\left\{ {{P_{a,t}}{G_a}{G_i}|{h_{a,i,t}}{|^2}{\sigma ^{ - 2}}} \right\}= {P_{a,t}}{G_a}{G_i}L_{a,i,t}^{ - 1}{\sigma ^{ - 2}}.
\end{equation}
Let ${a_{a,i,t}} = {P_{a,t}}{G_a}{G_i}L_{a,i,t}^{ - 1}{\sigma ^{ - 2}}$. We analyze the relationship between ${R_{a,i,t}}$ and ${a_{a,i,t}}$ via the first-order and second-order derivatives.
By using the known positions of the transmitter and the receiver, the path loss ${L_{a,i,t}}$ in (\ref{eqn_10}) can be obtained. Let
\begin{equation}
{b_{a,i,t}} = {\left| {\sqrt {\frac{K}{{1 + K}}}  + \sqrt {\frac{1}{{1 + K}}} {g_{a,i,t}}} \right|^2}.
\end{equation}
Since ${g_{a,i,t} \in \mathcal{CN}(0,1)}$, the variable ${b_{a,i,t}}$ follows a non-central chi-square probability density function with two degrees of freedom as
\begin{equation}
{f_{{b_{a,i,t}}}}\left(\gamma \right) = \left({1 + K} \right){e^{- K}}{e^{- \left( {1 + K} \right)\gamma }}{I_0}\left( {2\sqrt{K\left({1 + K}\right)\gamma}} \right)
\end{equation}
where $\gamma\geq 0$ and ${I_0({\cdot})}$ is the zeroth-order modified Bessel function of the first kind \cite{IEEEhowto:kopka23}. Then, ${R_{a,i,t}}$ can be expressed as
\begin{flalign}\label{eqn_9}
{R_{a,i,t}} = {\rm{lo}}{{\rm{g}}_2}e\int_0^\infty  {{\rm{ln}}\left( {1 + {a_{a,i,t}}\gamma } \right){f_{{b_{a,i,t}}}}\left( \gamma  \right)d} \gamma.
\end{flalign}
The first-order derivative with respect to ${a_{a,i,t}}$ is
\begin{flalign}\label{eqn_17}
{{\dot R}_{a,i,t}} ={\rm{lo}}{{\rm{g}}_2}e\int_0^\infty  {\frac{\gamma }{{1 + {a_{a,i,t}}\gamma }}{f_{{b_{a,i,t}}}}\left( \gamma  \right)d} \gamma.
\end{flalign}
The second-order derivative with respect to ${a_{a,i,t}}$ is
\begin{flalign}\label{eqn_18}
{\ddot R_{a,i,t}} = {\rm{lo}}{{\rm{g}}_2}e\int_0^\infty  {\frac{{ - {\gamma ^2}}}{{{{\left( {1 + {a_{a,i,t}}\gamma } \right)}^2}}}{f_{{b_{a,i,t}}}}\left( \gamma  \right)d} \gamma.
\end{flalign}
Because ${a_{a,i,t}}\geq 0$ and ${f_{{b_{a,i,t}}}}\left( \gamma  \right)>0$, ${{\dot R}_{a,i,t}}>0$ and ${{\ddot R}_{a,i,t}}<0$.
So, ${R_{a,i,t}}$ is an increasing function of ${a_{a,i,t}}$ and strictly concave.

Thus, the theorem is proved.

\section{Proof of Lemma 1}
According to that any convex function is globally lower-bounded by its first-order Taylor expansion at any point \cite{IEEEhowto:kopka18}, with the given ${{\bf{v}}_{a,t}^r}$ and ${{\bf{c}}_{a,t}^r}$, we have the following inequalities
\begin{flalign}\label{eqn_24}
&{\left\| {{\bf{v}}_{a,t}} \right\|_2^2} \ge  {\left\| {\bf{v}}_{a,t}^r \right\|_2^2} +
2{{\bf{v}}_{a,t}^r}^T({{\bf{v}}_{a,t}} - {\bf{v}}_{a,t}^r),\\
&{\left\| {{\bf{c}}_{a,t}-{\bf{c}}_{i,t}} \right\|_2^2} \ge  {\left\|{{\bf{c}}_{a,t}^r-{\bf{c}}_{i,t}}\right\|_2^2} +
2{({{\bf{c}}_{a,t}^r-{\bf{c}}_{i,t}})}^T({{\bf{c}}_{a,t}} - {\bf{c}}_{a,t}^r).
\end{flalign}
Then, combining the constraints in (\ref{eqn_22_b}), (\ref{eqn_21_a}) and (\ref{eqn_21_c}), the lemma is proved.

\bibliographystyle{IEEEtran}

\end{spacing}
\end{document}